\documentclass[letterpaper, 10 pt, conference]{ieeeconf}                                                             
\IEEEoverridecommandlockouts                             
                                                                                                             \overrideIEEEmargins
\usepackage{graphics} 
\usepackage{epsfig} 
\usepackage{mathptmx} 
\usepackage{amsmath} 
\usepackage{amssymb}  
\usepackage{graphicx}
\usepackage{subcaption}
\usepackage{color,colortbl}
\usepackage{cleveref}
\usepackage{url}
\usepackage{enumerate}

\newtheorem{theorem}{Theorem}

\newtheorem{problem}[theorem]{Problem}

\newtheorem{remark}{Remark}

\newtheorem{assumption}{Assumption}

\title{\LARGE \bf A simple controller for the transition maneuver of a tail-sitter drone}

\author{A. Flores, A. Montes de Oca and G. Flores \thanks{A. Flores, A. Montes de Oca and G. Flores are with the Perception and Robotics Laboratory, Centro de Investigaciones en \'{O}ptica, Le\'{o}n, Guanajuato, Mexico, 37150. (email: alejandrofl@cio.mx, andresmr@cio.mx, gflores@cio.mx). Corresponding author: Gerardo Flores.} 
\thanks{This work was supported by the FORDECYT-CONACYT under grant 000000000292399.}
}

\begin{document}
\maketitle
\thispagestyle{empty}
\pagestyle{empty}
\begin{abstract}
This paper presents a controller for the transition maneuver of a tail-sitter drone. The tail-sitter model considers aerodynamic terms whereas the proposed controller considers the time-scale separation between drone attitude and position dynamics. The controller design is based on Lyapunov approach and linear saturation functions. Simulations experiments demonstrate the effectiveness of the derived theoretical results.
\end{abstract}
\begin{keywords}
Convertible drone; transition maneuver; VTOL; Lyapunov stability; saturation control; tail-sitter.
\end{keywords}
\IEEEpeerreviewmaketitle
\section{Introduction} \label{sec:intro}
UAVs designs can be divided into two principal types: fixed-wing and the rotatory aircraft, both have their respective pros and cons. Nonetheless, it is possible to obtain certain capabilities of both of them by using the so-called convertible aircraft. There exist two main configurations of convertible aircraft: tilt-rotors \cite{1,7040348, GFlores:thesis, flores_cuav_jirs, Flores:IFAC14_convertible} and tail-sitter \cite{4,9,10}. The tail-sitter transition maneuver is considerable more complex than tilt-rotors, since it is required that the complete UAV body tilt around one axis for achieving the \textit{transition maneuver}, an ordered movement that carries the UAV from hover to cruise mode and vice versa, see Fig. \ref{fig:cambio}. This paper is focused to develop and implement a controller especially dedicated to achieve such transition maneuver.

Most of the papers related with this topic tackle the problem from a simplified model without taking into account aerodynamics, bounded limits of actuators, or even smoothness of the controller. In \cite{1,5,14} the control strategy consists in increasing the motor angles until they are horizontally aligned, then a switching strategy between flight modes is performed. In \cite{19} authors develop a robust algorithm to control airplane and hover flight modes considering aerodynamic terms, however it does not present a transition control. In \cite{16} the transition is performed manually and the control relies in modifying the flight modes control weights according to the percentage of the transition progress. In \cite{15,8444168,8327537,LI2018348} it is implemented a controller in for the transition maneuver of a tail-sitter UAV, the controller is designed in the 6-DOF, on the other hand, however \cite{8431535} lacks of a theoretical proof which demonstrate the effectiveness of the transition maneuver. Also, several unified controllers which avoids control switching methods are investigated in \cite{8206359,8463158}.
\begin{figure}[ht!]
	\centering
	\includegraphics[width=250pt,page=1]{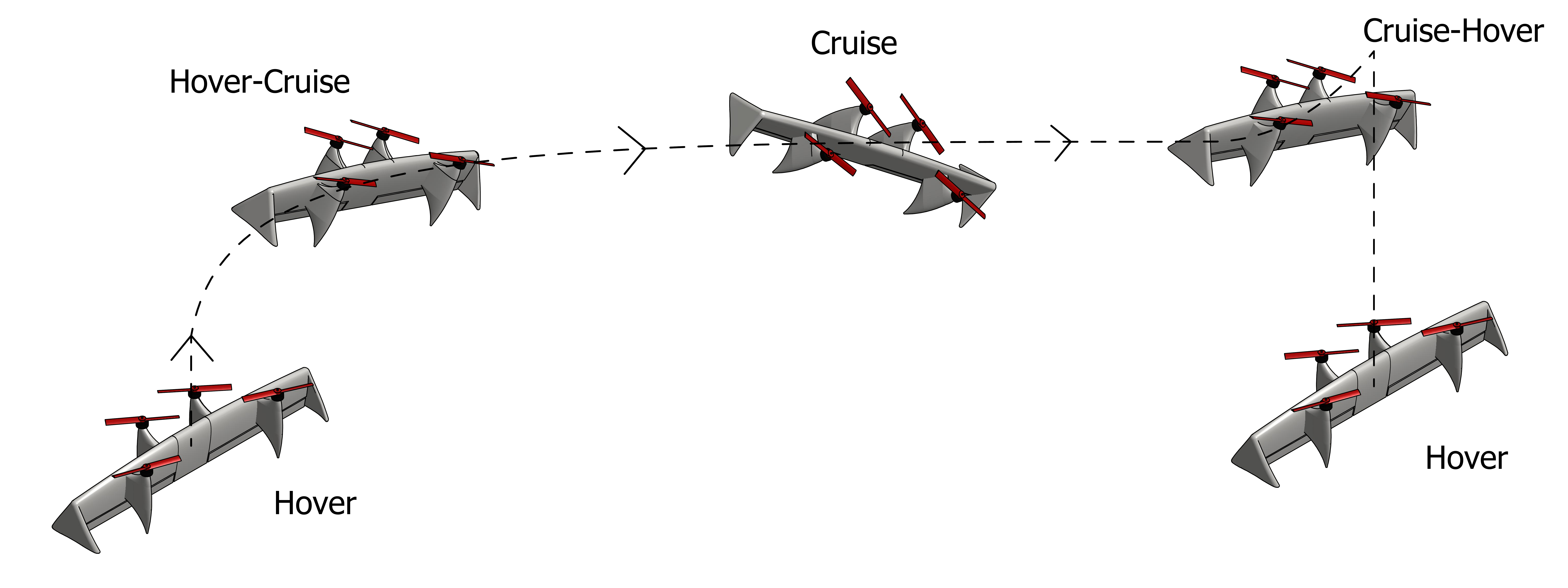}
	\caption{Transition maneuver, graphical description.}
	\label{fig:cambio}
\end{figure}

In this paper we present a mathematical UAV model which considers aerodynamic effects. The proposed controller takes into account the usual available states in a real scenario. Also, the control design avoids switching dynamics between modes, having a sweet and bounded controllers, which are desired properties in real implementations. The controller designed is based on the time-scale separation property presented in attitude and position UAV dynamics \cite{20,21}. The control design is used saturation functions and Lyapunov stability.

The remainder of this paper is organized as follows. In Section \ref{sec:problem} system model and problem statement are presented. Section \ref{sec:result} presents the control system proposed and the followed strategy to ensure a successful flight mode transition. In Section \ref{sec:sims} simulations experiments demonstrate the effectiveness of the theoretical results. Finally, some conclusions and further work are described in Section \ref{sec:conclusions}.
%
\section{Problem Formulation} \label{sec:problem}
The tail-sitter UAV have three operation modes: a) hover; b) cruise; and c) transition mode. The last mode consists in the transition phase between cruise (hover) to hover (cruise); this flying mode will be investigated in this paper. In hover to cruise mode the attitude changes from $90$ to approximately $6$ degrees around between the $y$ axis, which is the range for cruise flight mode. It is the same transition from cruise to hover mode but with the angle reversed.
%
\begin{figure}[ht!]
	\centering
	\includegraphics[width=180pt,page=2]{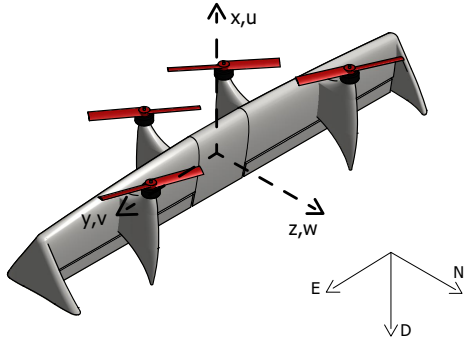}
	\caption{Tail-sitter aircraft; body and world coordinate frames.}
	\label{frames}
\end{figure}
The drone is composed by six actuators: four rotors and two servos which manipulate attitude dynamics. These six actuators results in an over-actuated roll and pitch subsystems. Figs. \ref{fig:cambio} and \ref{frames} depict the tail-sitter drone used in this study.

\subsection{System Model}
The analysis and control design is implemented on the $(x,z)$ plane from the body frame. This is the plane where the most interesting dynamics for the transition maneuver appears \cite{6,8}. Forces affecting aircraft behavior are shown in Fig. \ref{forces}. The system can be described as follows
%
\begin{figure}[ht!]
	\centering
	\includegraphics[width=180pt]{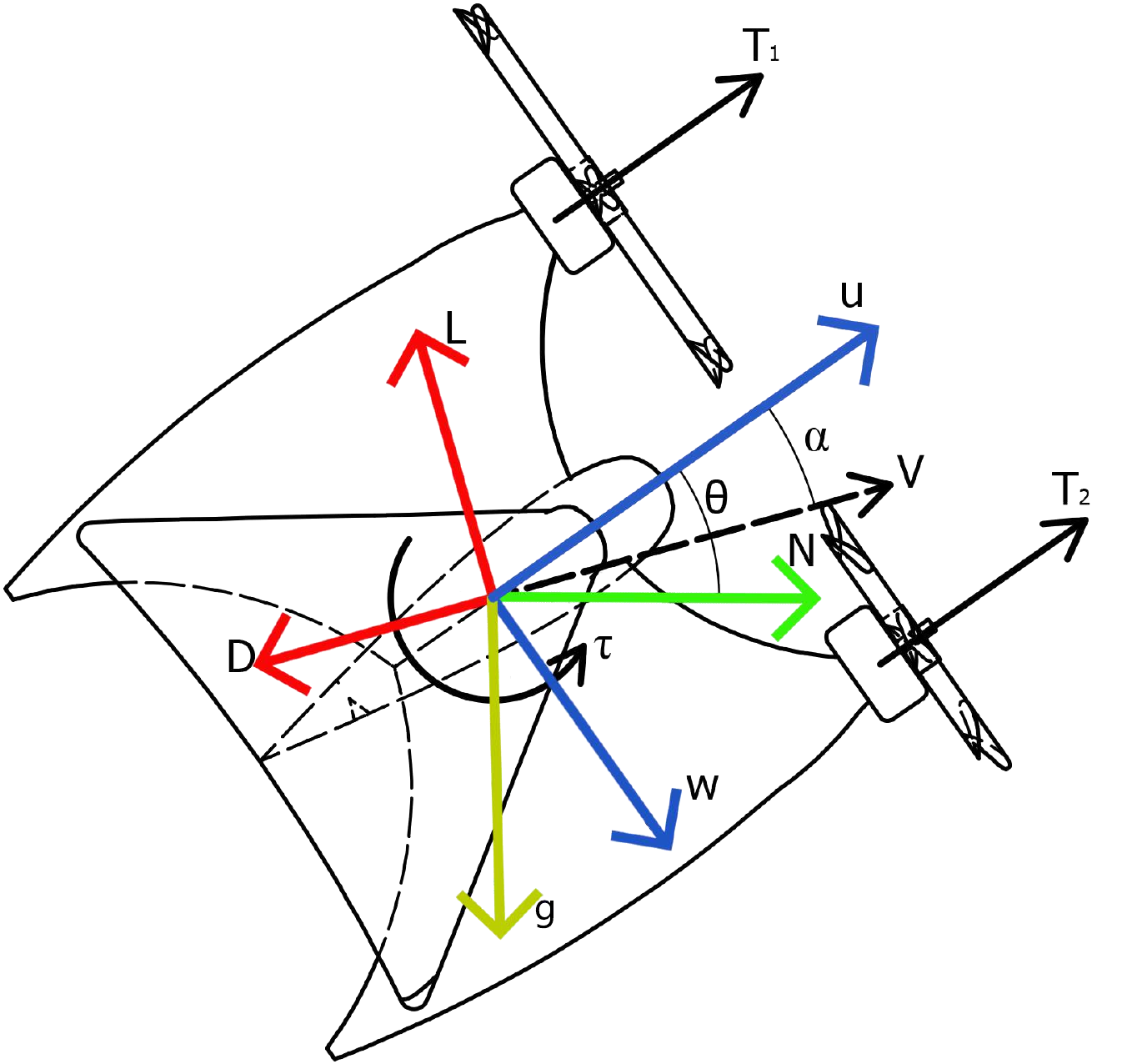}
	\caption{System forces taken into account for mathematical modeling and control design.}
	\label{forces}
\end{figure} 
\begin{eqnarray}
\label{sum1}
&\Sigma_{1}&\begin{cases}
        	\dot{u} = -D \cos\alpha + L\sin\alpha + T - g\sin\theta - qw\\
        	\dot{w} = -D \sin\alpha - L\cos\alpha + g\cos\theta + qu
	\end{cases}\\
\label{sum2}
    &\Sigma_{2}& \begin{cases}
        	\dot{\theta} = q\\
        	\dot{q} = \tau.
	\end{cases}
\end{eqnarray}
Model variables and parameters are defined in Table \ref{tab:variables}.
\begin{table}[!hbt]
\begin{center}
\begin{tabular}{|| l | l ||} 
\hline
\hline 
$u$ & $x$ body velocity expressed in body frame \\
\hline 
$w$ & $y$ body velocity expressed in body frame \\
\hline
$\theta$ & Aircraft pitch angle \\
\hline
$q$ & Aircraft angular velocity \\
$D$ & Drag force \\ 
\hline
$L$ & Lift force \\ 
\hline
$\alpha$ & Angle of Attack (AoA) \\ 
\hline
$g$ & Gravity \\ 
\hline
$T$ & Thrust force (Input control)\\
\hline
$\tau$ & Torque (Input control)\\
\hline
\end{tabular}
\caption{System model variables and parameters.}
\label{tab:variables}
\end{center} 
\end{table}
The angle of attack (AoA) is defined as
\begin{equation}
\label{eqn:alpha}
\alpha = \tan^{-1}\frac{w}{u}.
\end{equation}
Control inputs for system (\ref{sum1}-\ref{sum2}) are the thrust $T$ and the torque $\tau$ generated by the difference of thrust between rotors and the aileron angle. These terms can be described as    
\begin{eqnarray}
	\label{eqn:thrust} 
	T &=& T_{1} + T_{2} \\
    \label{eqn:tau}
    \tau &=& \tau_{\delta} + \tau_{L} + (T_{1} - T_{2})
\end{eqnarray}
where $\tau_{\delta}$ is the part of torque generated by the propellers wind stream in the aileron; $\tau_{L}$ is the torque induced by wind stream shocking in the wing area, and the rest of the torque is originated by the difference in force between the two pairs of motors. The total thrust force is calculated by the sum of each motor's thrust.

System (\ref{sum1})-(\ref{sum2}) can be divided in two subsystems, where $\sum{}{}_{1}$ and $\sum{}{}_{2}$ represent translational and attitude dynamics, respectively. From \cite{20,21}, it is known that $\sum{}{}_{2}$ is faster than $\sum{}{}_{1}$, hence $\theta$ can be used as a virtual controller for $\sum{}{}_{1}$. Subsystem $\sum{}{}_{2}$ is a double integrator and can be fully controlled by $\tau$.

\subsubsection{Aerodynamic forces}
The proposed model (\ref{sum1})-(\ref{sum2}) considers aerodynamic terms explained next. The \textit{lift} $L$ and \textit{drag} $D$ forces are nonlinear functions of several variables. The most important of these variables are: airspeed, AoA and lift and drag coefficients \cite{7}. Lift and drag are defined as follows 
\begin{eqnarray}
	\label{eqn:Lforce} 
	L &=& KC_{L}V^{2} \\
    \label{eqn:Dforce}
	D &=& KC_{D}V^{2}
\end{eqnarray}
where $K = \frac{\rho S}{2}$; $S$ is the wing hitting area and $\rho$ is air pressure; $V=\sqrt{u^{2} + w^{2}}$ is air speed magnitude. The drag and lift coefficients $(C_{L}, C_{D})$ for the airfoil used in this experiment are described at Fig. \ref{wing_forces} according to the angle of attack $\alpha$. Such coefficients correspond to a symmetrical airfoil NACA-0020 shown at Fig. \ref{airfoil}. They were obtained by numerical simulations made in the software XFR5 with many different Reynold numbers. Several aileron angles $\delta$ are considered to compute different lift and drag coefficients.
\begin{figure}[ht!]
	\centering
	\includegraphics[width=240pt, height=300pt]{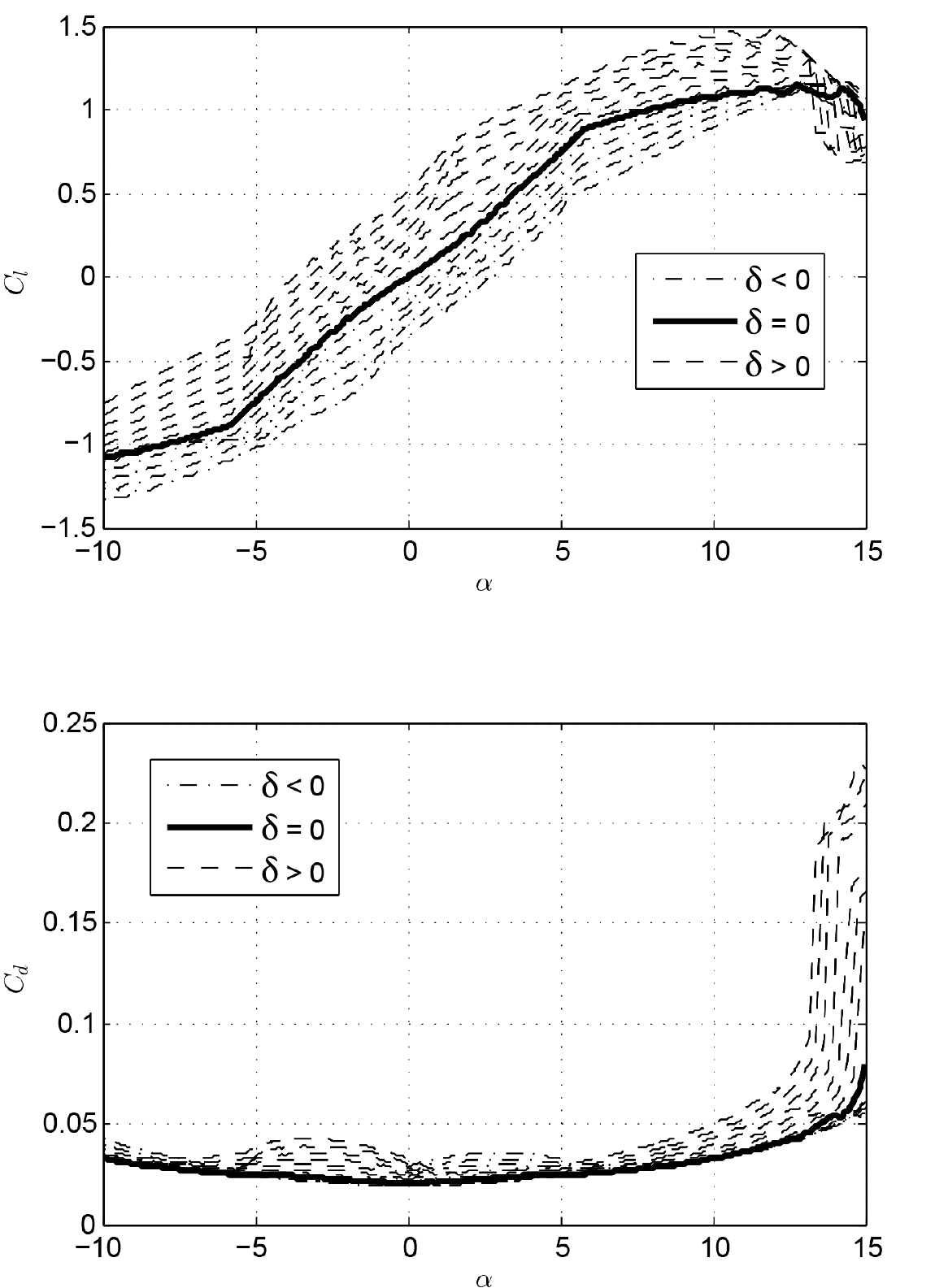}
	\caption{Lift and Drag coefficients w.r.t $\alpha$ angle obtained from XFR5 software and studies presented in \cite{18}. According to the maximum lift to drag ratio $(L/D)$ considering $C_{l}$ and $C_{d}$, it is obtained an AoA $\alpha=6$ $degrees$ tha corresponds to the optimal AoA for cruise flight mode.}
	\label{wing_forces}
\end{figure}
%
\begin{figure}[ht!]
	\centering
	\includegraphics[width=220pt]{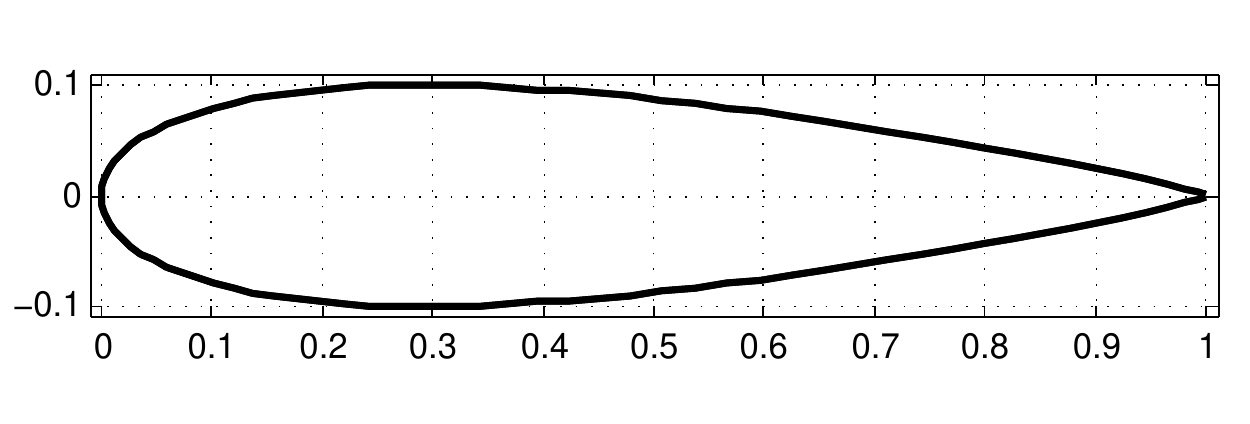}
	\caption{Normalized airfoil profile used in the aircraft (NACA-0020).}
	\label{airfoil}
\end{figure}

\subsection{Problem Statement}
The problem can be summarized in the following
\begin{problem}
Given system (\ref{sum1})-(\ref{sum2}), design a continuous and easy-to implement control with bounded inputs that stabilizes the transition maneuver from hover to cruise mode and vice-versa.
\end{problem}
%
\section{Main Result} \label{sec:result}
The key idea behind the proposed controller is explained next. Based on time-scale separation principle presented in \cite{20,21}, it is possible to use the variable $\theta$ as a virtual controller for subsystem \eqref{sum1}. Once this virtual controller is designed, its value must be tracked by the subsystem \eqref{sum2}. 

Let's start with the design of the virtual controller for $\sum_{1}$. In this case the states $u$ and $w$ should converge to predefined values according to the appropriate characteristics for the actual flight mode. Subsystem \eqref{sum1} can be rewritten as
\begin{eqnarray}
	\label{eqn:udot3}
    \dot{u} &=& f_{1}(u,w,q) + T - g\sqrt[2]{1-\epsilon^{2}}\\
    \label{eqn:wdot3}
    \dot{w} &=& f_{2}(u,w,q) + g\epsilon
\end{eqnarray}
%
where
\begin{eqnarray*}
	\epsilon &=& \cos\theta\\
    f_{1}(u,w,q) &=& -D \cos\alpha + L\sin\alpha - qw\\
	f_{2}(u,w,q) &=& -D \sin\alpha - L\cos\alpha + qu.
\end{eqnarray*}
From the previous discussion $\epsilon$ is considered as a controller, since it is a function of $theta$ angle. Before proposing the control algorithm we should define some natural assumptions and remarks:
\begin{assumption}
\label{ass:1}
In cruise flight mode it is natural to consider that $u \gg w$, i.e. the horizontal velocity is much greater than vertical velocity.
\end{assumption}
%
\begin{assumption}
\label{ass:2}
In hover mode $u\approx 0$ and $w\approx 0$.
\end{assumption}
%
\begin{remark} 
\label{rk:alpha}
From \eqref{eqn:alpha}, $u \neq 0$ .
\end{remark}
%
\begin{remark}
\label{rk:epsilon}
Since \eqref{eqn:udot3} holds, then $\mid \epsilon \mid \leq 1$.
\end{remark}
%

One important aspect of the control strategy is to design desired values of aircraft velocities $(u_{d}, w_{d})$ according to the speed characteristics of each flight mode. If the transition maneuver is from hover to cruise, $u$ must increase to a value in which the lift force is enough to maintain the aircraft at a constant altitude. On the other hand, $w$ must be designed w.r.t. $u$ in such a way that the aircraft AoA be at the optimal value considering the lift to drag ratio $(L/D)$, which for the lift and drag coefficients shown at Fig. (\ref{wing_forces}) results in $6$ degrees. Equations for $(u_{d}, w_{d})$ are computed as follows
\begin{eqnarray}
\label{eq:ud}
&u_{d}=&\begin{cases}
        	\frac{\arctan \big(a_{u}(\frac{t}{5}-L_{u})\big) }{a_{u}}+L_{u} \hspace{12pt} $if$ \hspace{10pt} \frac{t}{5} > L_{u}\\
            \frac{t}{5}  \hspace{88pt} $if$ \hspace{10pt} \frac{t}{5} \leq L_{u}
	\end{cases}\\
    \label{eq:al}
&\alpha_{d}=&\begin{cases}
        	\frac{\arctan \big(a_{\alpha}(t-L_{\alpha})\big) }{a_{\alpha}}+L_{\alpha}  & $if$ \hspace{10pt} t > L_{\alpha}\\
            t  & $if$ \hspace{10pt} t\leq L_{\alpha}
	\end{cases}\\
    \hspace{10pt}
\label{eq:wd}
    &w_{d}=& u_{d}\tan\alpha_{d}
\end{eqnarray}
where $a_{u} = \frac{\pi}{2(M_{u}-L_{u})}$, $a_{\alpha} = \frac{\pi}{2(M_{\alpha}-L_{\alpha})}$; $(M_u, L_u)$, $(M_{\alpha}, L_{\alpha})$ are positive constants which $L_u \leq M_u$, $L_{\alpha} \leq M_{\alpha}$ and $M_u$, $M_{\alpha}$ are the maximum value the functions could have and $L_u$, $L_{\alpha}$ are the linear region of the functions. For determining the desired AoA from hover to cruise mode, it is supposed that the system starts from $\alpha=0$ and ends in $\alpha=6$ degrees which can be calculated with \eqref{eq:al}. It is important to mention that $(u_d,w_d,\alpha_d)$ are continuously differentiable functions. In the same way, when the transition occurs from cruise to hover, both speeds must be reduced in such a way that the AoA decreases to the point of being equal to zero.

From Remark \ref{rk:epsilon}, the virtual control $epsilon$ is designed by using a continuous and non-decreasing saturation function $\sigma(s)$ satisfying \cite{23}:
\begin{enumerate}[(a)]
\item $s\sigma(s) > 0$	\quad\quad $\forall s \neq 0$.
\item $\mid\sigma(s)\mid \leq M_s$ \quad\space $\forall s \in \mathbb{R}$.
\item $\sigma(s) = s$ when $\mid s \mid \leq L_s$.\\
\end{enumerate}
where $L_s$ and $M_s$ are positive constants which describes the linear part of the function and the saturation value, respectively, hence $L_s \leq M_s$. It is now possible to propose a preliminary control system $(\epsilon, T)$ as follows
\begin{eqnarray}
	\label{eqn:epsilon}
    \epsilon &=& -\sigma_{2}\big(f_{2}(u,w,q) + \sigma_{1}(w - w_{d}) - \dot{w}_{d}\big)\\
    \label{eqn:T}
    T &=& -\sigma_{3}(u - u_{d}) + g\sqrt[2]{1-\epsilon^{2}} - f_{1}(u,w,q) + \dot{u}_{d}.
\end{eqnarray}
From (\ref{eqn:epsilon}) it is possible to obtain $\theta_{d}$. Then it is possible to define controller
\begin{equation}
        \label{control:tau}
      \tau = -k_{\theta}(\theta - \theta_{d}) - k_{q}(q - q_{d}) + \dot{q}_d
\end{equation}
The main result can be summarized in the following 
\begin{theorem}
Let system \eqref{sum1}-\eqref{sum2} and consider Assumptions (\ref{ass:1},\ref{ass:2}), then controllers \eqref{eqn:T}, \eqref{control:tau} with $\theta_{d}$ obtained from \eqref{eqn:epsilon} results in a local asymptotically stable closed-loop system.
\end{theorem}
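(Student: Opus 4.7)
The plan is to decouple the closed loop into an inner (attitude) loop and an outer (translational) loop using the time-scale separation invoked in the design, prove stability of each in error coordinates, and then stitch them together by a cascade argument. I introduce the tracking errors $e_u = u - u_d$, $e_w = w - w_d$, $e_\theta = \theta - \theta_d$, and $e_q = q - q_d$, all of which are well defined because $u_d$, $w_d$, $\alpha_d$ are $C^1$ by construction in \eqref{eq:ud}--\eqref{eq:wd} and because $\theta_d$ inherits the smoothness of $\epsilon$ through \eqref{eqn:epsilon}.

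First I would handle the fast subsystem $\Sigma_2$. Substituting \eqref{control:tau} into \eqref{sum2} gives the linear error dynamics $\dot e_\theta = e_q$, $\dot e_q = -k_\theta e_\theta - k_q e_q$, whose matrix is Hurwitz for any $k_\theta, k_q > 0$. A standard quadratic Lyapunov function of the form $V_2 = \tfrac{1}{2}(k_\theta e_\theta^2 + e_q^2) + \beta e_\theta e_q$ with sufficiently small $\beta > 0$ certifies global exponential stability of $(e_\theta, e_q) = 0$, so that $\cos\theta \to \epsilon$ exponentially and the convergence rate can be tuned to be much faster than that of $\Sigma_1$.

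Next I would analyze the slow subsystem $\Sigma_1$ assuming perfect virtual tracking $\cos\theta = \epsilon$ (the gap will be treated as a vanishing perturbation in the cascade step). Substituting \eqref{eqn:T} into \eqref{eqn:udot3} cancels the $g\sqrt{1-\epsilon^2}$, $f_1$ and $\dot u_d$ terms and yields $\dot e_u = -\sigma_3(e_u)$; properties (a)--(c) of $\sigma$ immediately give $V_u = \tfrac{1}{2} e_u^2$, $\dot V_u = -e_u\sigma_3(e_u) \le 0$, with $\dot V_u = 0$ only at $e_u=0$. For $e_w$, substituting the virtual control \eqref{eqn:epsilon} into \eqref{eqn:wdot3} and restricting to the linear region $|\cdot| \le L_2$ of $\sigma_2$ produces $\dot e_w = -g\,\sigma_1(e_w)$, so that $V_w = \tfrac{1}{2} e_w^2$ decreases strictly outside the origin. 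A composite Lyapunov candidate $V_1 = V_u + V_w$ then certifies asymptotic stability of the translational error in a neighborhood of the operating point where the saturations act linearly and where Remarks \ref{rk:alpha}--\ref{rk:epsilon} hold.

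To combine the two analyses, I would treat the realized attitude as a perturbation $\Delta(t) = \cos\theta(t) - \epsilon(t)$ injected additively into the $w$-equation. Because $\Delta(t)$ decays exponentially by the $\Sigma_2$ analysis and $f_1, f_2$ are smooth in $(u,w,q)$ on the working region, the outer loop can be shown to be locally input-to-state stable with respect to $\Delta$, and a standard cascade stability argument yields local asymptotic stability of the full closed loop. The hard part, and the reason the theorem is only local, is precisely the geometric feasibility: one must keep the closed-loop trajectory inside the set where $u \neq 0$ (Remark \ref{rk:alpha}), where $|\epsilon| \le 1$ (Remark \ref{rk:epsilon}, enforced by picking $M_2 \le 1$ in the design of $\sigma_2$), and where the saturations $\sigma_1,\sigma_2,\sigma_3$ are linear so that the clean cancellations above hold. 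Establishing forward invariance of this region, and quantifying how fast $\Sigma_2$ must be relative to $\Sigma_1$ so that the ISS gain from $\Delta$ into $V_1$ is dominated, is where the main technical effort will lie; the aerodynamic coupling through $f_2(u,w,q)$ contributes a bounded, quantifiable disturbance that fits into the same ISS bound.
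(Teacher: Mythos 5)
Your proposal takes essentially the same route as the paper's proof: the same error coordinates $(u-u_d,\,w-w_d,\,\theta-\theta_d,\,q-q_d)$, the same split into an exponentially stable linear attitude-error subsystem and a saturation-stabilized translational subsystem, and the same Lyapunov function $V=\tfrac12 e_u^2+\tfrac12 e_w^2$ for the latter; the paper compresses the outer-loop and interconnection analysis into ``following the same procedure of \cite{23}'', whereas you spell out the cascade/ISS step and the forward-invariance issues that make the result only local, which is where the real technical content lies. One concrete caveat on the outer loop: substituting \eqref{eqn:epsilon} into \eqref{eqn:wdot3} in the linear region of $\sigma_2$ gives $\dot e_w=(1-g)\bigl(f_2-\dot w_d\bigr)-g\,\sigma_1(e_w)$ rather than $-g\,\sigma_1(e_w)$, because the factor $g$ multiplying $\epsilon$ also scales the $f_2-\dot w_d$ term it is meant to cancel; the clean cancellation you assert requires a factor $1/g$ inside the argument of $\sigma_2$ (or gravity normalized to one). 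That residual does not vanish at $e_w=0$, so as written this step of your outer-loop Lyapunov argument does not go through literally -- a defect inherited from the controller as printed rather than from your strategy, but one you should flag and repair rather than reproduce.
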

\begin{proof}
Consider a coordinate change over the original system with
\begin{eqnarray*}
	x_{1} = u - u_{d} \hspace{15pt}  x_{2} = w - w_{d} \hspace{15pt} x_{3} = \theta - \theta_{d} \hspace{15pt} x_{4} = q - q_{d}.
\end{eqnarray*}
Since $x_{3}$ and $x_{4}$ states are independent of $x_{1}$ and $x_{2}$, and because it is well known that the structure of the subsystem $\sum_{2}$ is at least GAS, then the problem reduces to investigate stability of states $(x_1,x_2)$. For that, observe that the closed-loop system has a structure similar to that of \cite{23}, and then with the Lyapunov function
\begin{equation}
	\label{eqn:lyapunov2}
    V = \frac{1}{2}x_{1}^{2} + \frac{1}{2}x_{2}^{2}.
\end{equation}
and following the same procedure of \cite{23} one can obtain asymptotic stability of the closed-loop system.
\end{proof}
%
\section{Simulation Experiments} \label{sec:sims}
In this section a MATLAB simulation is performed, where two experiments are conducted: transition from hover to cruise mode and vice versa. To define the $u_{d}$ and $w_{d}$, the constants used in \eqref{eq:ud}, \eqref{eq:al} and \eqref{eq:wd} were $L_{u}=0.7$, $M_{u}=1$, $L_{\alpha}=4$, $M_{\alpha}=6$ and the saturation and gain constants for the controls were $k_{\theta}=5$, $k_{q}=3$, $L_{\sigma}=0.9$ and $M_{\sigma}=1$. Results are explained next.

\subsection{Hover to cruise flight mode}
During hover, the initial conditions were $u_{0}=0.18$, $w_{0}=0.03$ and $\theta_{0}=80$ degrees. Results are shown in Fig. \ref{fig4} where the evolution over time of $\theta$, $\alpha$ and $\tau$ are presented. Fig. \ref{fig5} depicts the thrust control as $u$ and $w$ evolve over time. According to the conditions established for the hover mode, at the end of the transition it is achieved that $u \gg w$. This is due to the conditions of cruise flight mode. Since the velocity at $u$ is much higher than $w$, it leads to a relatively small AoA as shown in the middle plot of Fig. \ref{fig4}, generating in the same way a lift force in the drone. This allows that the thrust needed to hold the aircraft in the air can be smaller as shown in the lower plot of Fig. \ref{fig5}. 

\subsection{Cruise to hover flight mode}
The simulation results for the opposite flight transition are presented with $u_{0}=0.7$, $w_{0}=0.2$ and $\theta_{0}=35$ degrees as the initial conditions. Fig. \ref{angulos2} shows the system evolution w.r.t. the desired angle and its respective AoA. In Fig. \ref{thrust2} it can be seen that the velocities decrease at $\approx0$ which corresponds to the hover condition. Since these speeds decrease, the lift and drag forces decrease in the same way as the angle of attack approaches to zero, then, to keep the drone flying, thrust force will be increased again as can be seen in the third plot of Fig. \ref{thrust2}.

\begin{figure}[ht!]
	\includegraphics[scale=0.65]{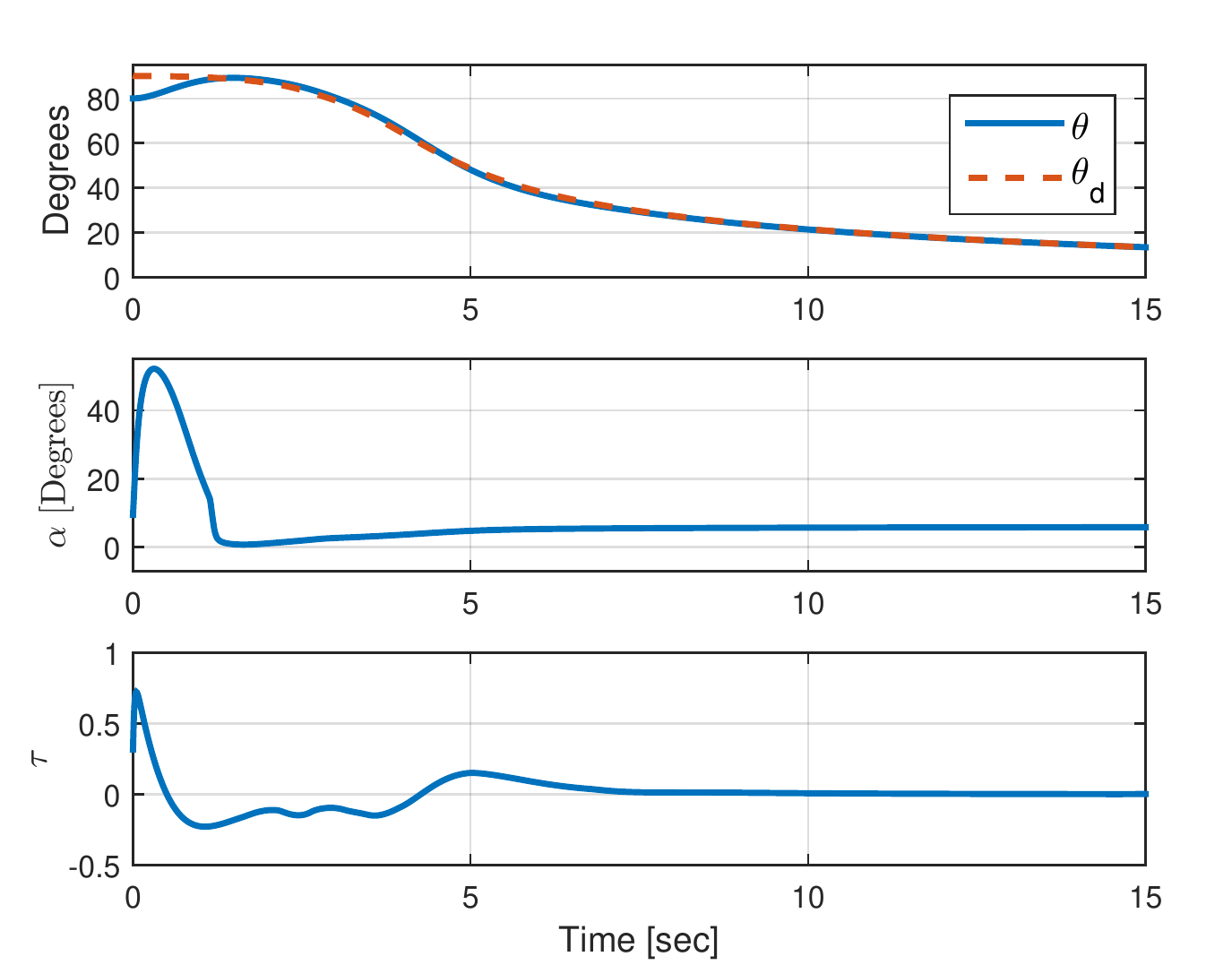}
	\centering
	\caption{$\theta$ and $\alpha$ evolution from hover to cruise flight mode according to $\theta_{d}$; also the control input $\tau$ is presented.} 
	\label{fig4}
\end{figure}
\begin{figure}[ht!]
	\includegraphics[scale=0.65]{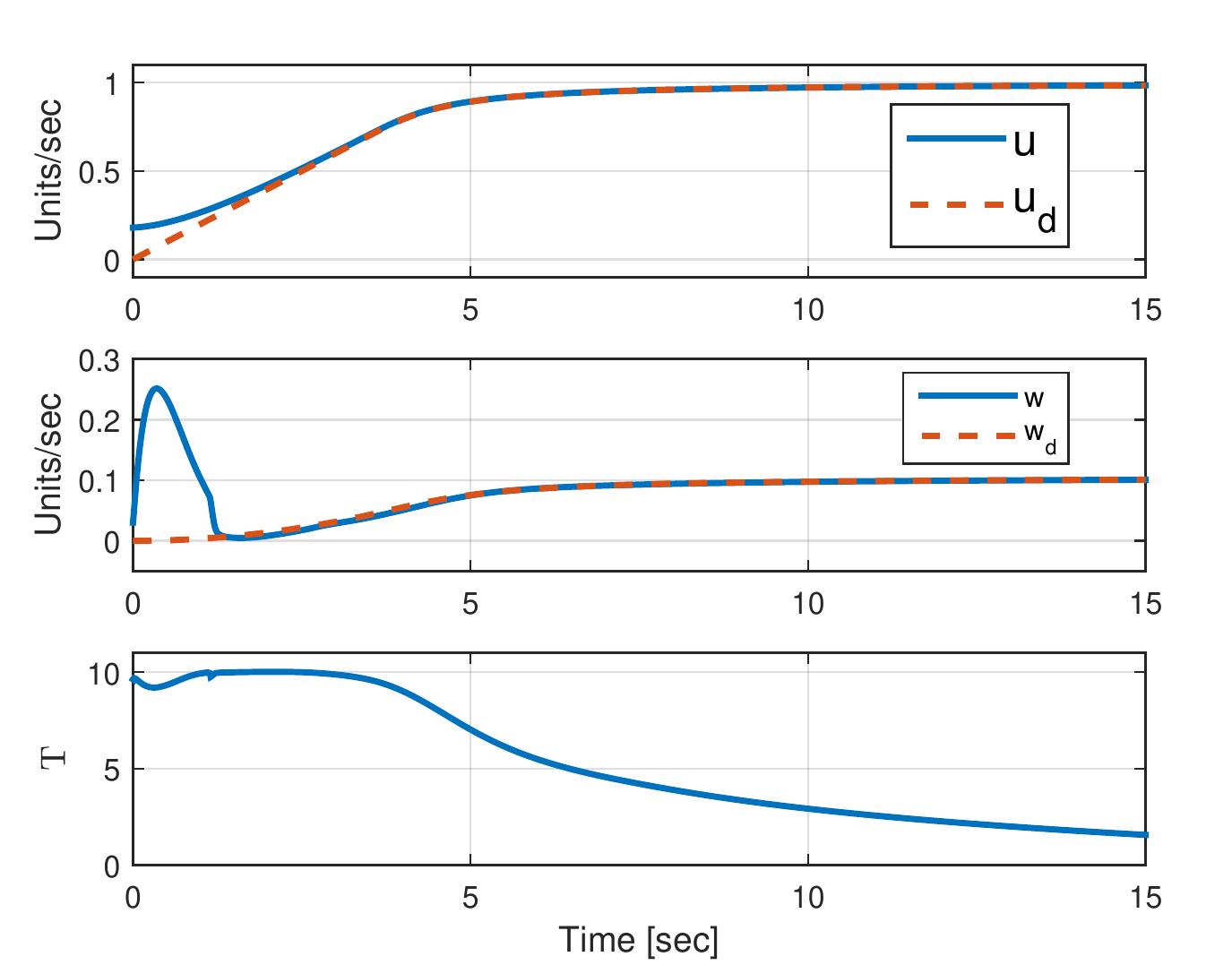}
	\centering
	\caption{Velocities $u$ and $w$ during the transition from hover to cruise flight mode, also the thrust control is presented.}
	\label{fig5}
\end{figure}
\begin{figure}[ht!]
	\includegraphics[scale=0.65]{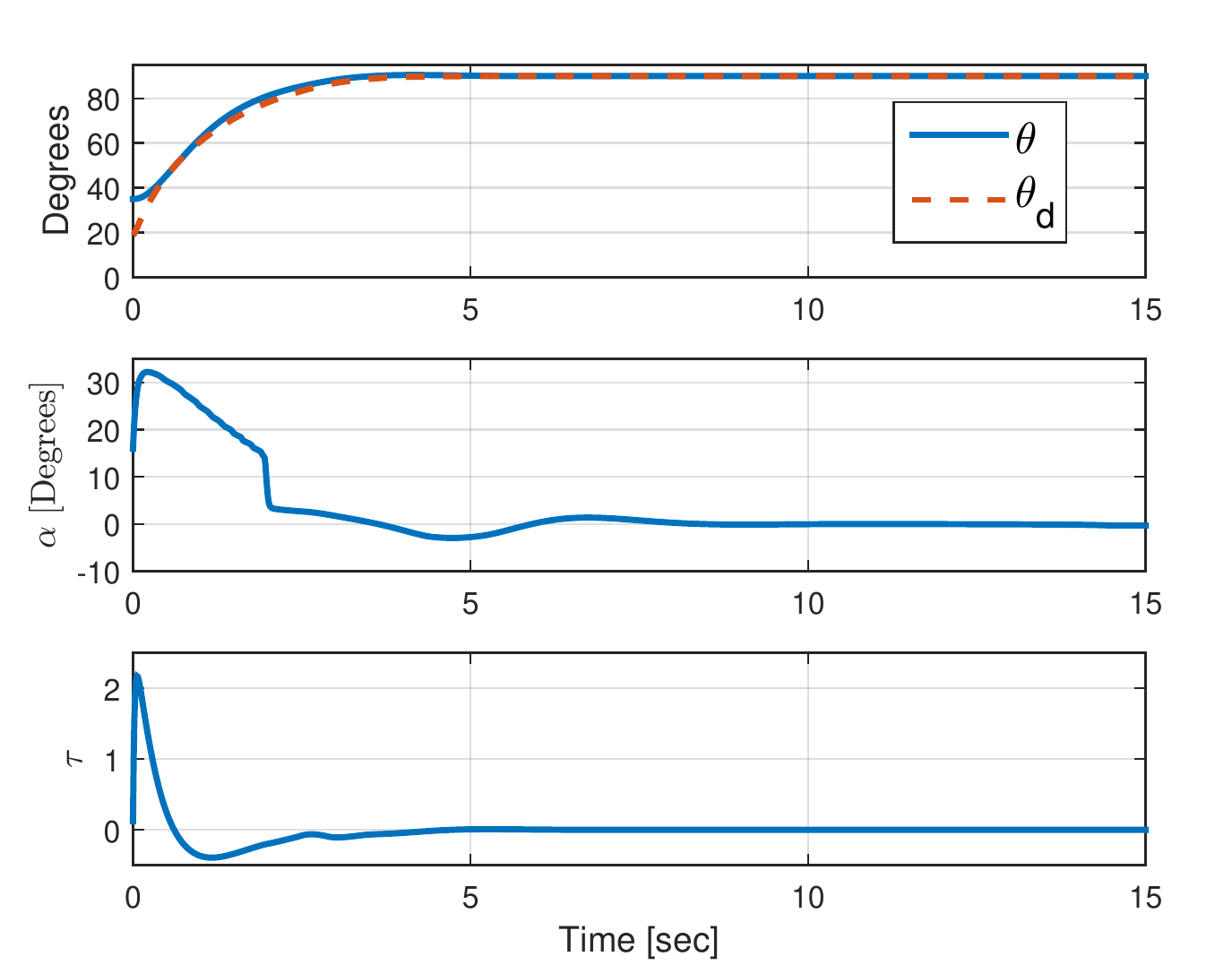}
	\centering
	\caption{$\theta$ and $\alpha$ angle evolution from cruise to hover flight mode according to $\theta_{d}$; also the control input $\tau$ is presented.} 
	\label{angulos2}
\end{figure}
\begin{figure}[ht!]
	\includegraphics[scale=0.65]{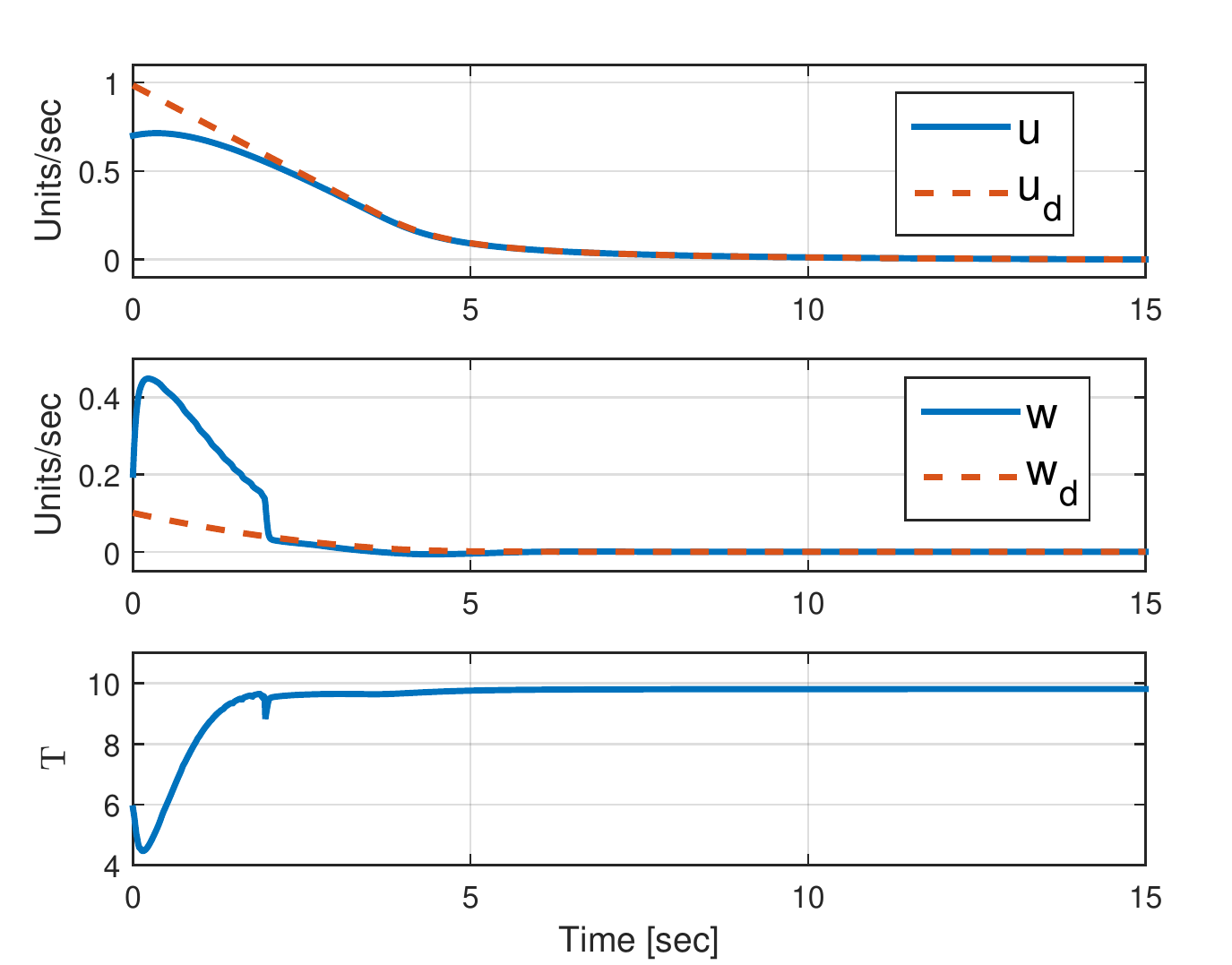}
	\centering
	\caption{$u$ and $w$ velocities during the transition from cruise to hover flight mode, with the respective thrust control.} 
	\label{thrust2}
\end{figure}
\section{Conclusions} \label{sec:conclusions}
In this paper, a simple control strategy for the transition maneuver of the tail-sitter UAV is proposed. Such controller is based on the time-scale separation and saturation functions. The design is based on Lyapunov approach. Simulations demonstrate the effectiveness of the controller for achieving transition from hover to cruise mode and vice versa. It is important to mention that the controller have the peculiarity that it does not present any switching, it is smooth and it takes into account the saturation limits imposed by the actuators. Such characteristics are useful for implementation in a real UAV.
    
In a further work, we can execute a real experiment with the physical prototype of the UAV and investigate the behavior of the real system. Also it could be possible to design a robust control that consider external disturbance as part of the modeling, such as wind gusts terms. Also the control can be designed by analyzing the 6-DOF model.

\bibliographystyle{IEEEtran}
\bibliography{biblio}
\end{document}